\theoremstyle{plain}
\newtheorem{theorem}{Theorem}
\newtheorem{lemma}{Lemma}
\newtheorem{claim}[theorem]{Claim}
\theoremstyle{definition}
\newcommand{\NN}{\mathbb{N}}
\newcommand{\ZZ}{\mathbb{Z}}
\DeclareMathAlphabet{\mathbfsl}{OT1}{ppl}{b}{it} 
\newcommand{\vu}{\mathbfsl{u}}
\newcommand{\vw}{\mathbfsl{w}}
\newcommand{\vx}{\mathbfsl{x}}
\newcommand{\vy}{\mathbfsl{y}}
\newcommand{\cS}{\mathcal{S}}
\newcommand{\cA}{\mathcal{A}}
\newcommand{\cB}{\mathcal{B}}
\newcommand{\bfc}{{\boldsymbol c}}
\newcommand{\bfu}{{\boldsymbol u}}
\newcommand{\bfv}{{\boldsymbol v}}
\newcommand{\bfw}{{\boldsymbol w}}
\newcommand{\bfx}{{\boldsymbol x}}
\newcommand{\bfy}{{\boldsymbol y}}
\newcommand{\fail}{{\tt FAILURE}}
\title{\Huge{Sequence Reconstruction Problem for Sticky Insertion/Deletion Channels}}
\author{\textbf{Van Long Phuoc Pham}\IEEEauthorrefmark{1},
        \textbf{Yeow Meng Chee}\IEEEauthorrefmark{2},
        \textbf{Kui Cai}\IEEEauthorrefmark{2},
        and \textbf{Van Khu Vu}\IEEEauthorrefmark{3}\\[0.5mm]
\IEEEauthorblockA{\IEEEauthorrefmark{1} \small Brown University, USA \\[1mm]}
\IEEEauthorblockA{\IEEEauthorrefmark{2} \small Singapore University of Technology and Design, Singapore\\[1mm]}
\IEEEauthorblockA{\IEEEauthorrefmark{3} \small VinUniversity, Vietnam\\[0.5mm]}

{\footnotesize phuoc\_pham\_van\_long@brown.edu,~\{ymchee, cai\_kui\}@sutd.edu.sg, khu.vv@vinuni.edu.vn  }
}
\begin{document}

\maketitle
\begin{abstract}
The sequence reconstruction problem for insertion/deletion channels has attracted significant attention owing to their applications recently in some emerging data storage systems, such as racetrack memories, DNA-based data storage. Our goal is to investigate the reconstruction problem for sticky-insdel channels where both sticky-insertions and sticky-deletions occur. 
If there are only sticky-insertion errors, the reconstruction problem for sticky-insertion channel is a special case of the reconstruction problem for tandem-duplication channel which has been well-studied. 
In this work, we consider the $(t,s)$-sticky-insdel channel where there are at most $t$ sticky-insertion errors and $s$ sticky-deletion errors when we transmit a message through the channel. For the reconstruction problem, we are interested in the minimum number of distinct outputs from these channels that are needed to uniquely recover the transmitted vector. We first provide an explicit formula for the minimum number of distinct outputs required.
Then, we discuss efficient algorithms to uniquely reconstruct the transmitted vector from erroneous sequences.
\end{abstract}

\section{Introduction}
Codes for the sticky insertion/deletion channel have been studied since the first paper by Levenshtein in 1965\cite{Lev65}. Several bounds on the channel capacity are known\cite{Mit08, CR19}. Many codes have been designed to correct sticky deletion/insertion errors with low redundancy\cite{DA10, TB08, TEB10, MV17}. Recently, codes correcting sticky insertion errors have attracted a significant attention owing to their new application in some data storage system, such as, racetrack memories \cite{CKVVY18, CKVVY18A} and DNA data storage systems \cite{MDSG17,JHSB17,KT18}. Especially, in \cite{MDSG17}, the authors modelled the nanopore sequencers as the sticky insertion/deletion channel and considered the sequence reconstruction problem for the channel. 

The sequence reconstruction problem, introduced by Levenshtein in 2001 in his seminal papers \cite{Lev01A,Lev01B}, considers a model where the sender transmits a vector $\bfx$ through a noisy channel a number of times (or through a number of noisy channels) and the receiver need to reconstruct the vector $\bfx$ from all received erroneous outputs. The transmitted vector $\bfx$ can be any vector or can be a codeword from a fixed codebook. In the reconstruction problem, we are interested in the minimum number of channels that are required for the receiver to reconstruct the transmitted vector uniquely and we are also interested in such an efficient algorithm to reconstruct the transmitted vector. To apply the reconstruction problem, we need to be able to send the message through the channel many times or we need to have many identical channels. Some emerging data storage systems, such as DNA-based data storage \cite{CGK12,YKGMZM15} 
and racetrack memories \cite{CKVVY18,CGVVY18,JB19}, are suitable for using the reconstruction problem. We note that the sequence reconstruction problem in the deletion channel has been well-studied\cite{GY18,SY21, CKY20}. Recently, in \cite{PK22}, the authors provided asymptotically exact estimate of the minimum number of channels required to reconstruct the transmitted vector in general case.
However, in the above data storage systems, both deletions and sticky insertions are common errors. We note that if only sticky-insertions occur in the channel, there are several related works on reconstruction problem for special versions of sticky-insertion channel, including block-version or tandem duplication\cite{MDSG17, YS19, WS22}. In this work, we focus on the reconstruction problem for the sticky-insdel channel, where both sticky-insertions and sticky-deletions occur.

We note that in the sticky-insdel channel, the number of runs of the transmitted vector is preserved, and the length of each run can be changed. Hence, we consider the transmitted vectors that have the same number of runs, say $r$. We assume that there are at most $t$ 
sticky-insertion errors and at most $s$ sticky-deletion errors when we transmit through a channel a vector $\bfx$ of $r$ runs. We aim to obtain exactly the minimum number of distinct outputs in these channels needed to reconstruct the transmitted vector. In the case that the transmitted vector $\bfx$ can be any binary vector with $r$ runs, we first provide a recursive formula to compute the exact value of the minimum number of distinct outputs needed. Then we use generating function and extract its coefficients to obtain an explicit formula for the exact number of required distinct erroneous sequences. 
Finally, we discuss efficient algorithms to recover the original vector from these erroneous sequences.

\section{Preliminaries} \label{sc:prelim}
Let $\mathbb{N}$ denote the set of natural numbers and $\mathbb{Z_+}$ denote the set of all positive integers. Let $\Sigma_q=\{0,1,\dotsc,q-1\}$ be the finite alphabet of size $q$, and the set of all strings of finite length over $\Sigma_q$ is denoted by $\Sigma_q^{*}$, while $\Sigma_q^n$ represents the set of all sequences of length $n$ over $\Sigma_q$. For two integers $i<j$, let $[i:j]$ denote the set $\{i,i+1,i+2, \ldots, j\}$. Let $[r]$ denote the set $\left\{ 1, \ldots, r \right\}$.
Let $\bfx = x_1x_2\ldots x_n \in \Sigma_q^n$ be a $q$-ary vector of length $n$. A \emph{run} is a maximal substring consisting of identical symbols and $\mathrm{r}(\bfx)$ denotes the number of runs of the sequence $\bfx$. We can  write $\bfx$ uniquely as $\bfx= c_1^{u_1}c_2^{u_2}c_3^{u_3}\dotsc c_r^{u_{r}}$, where $r=\mathrm{r}(\bfx)$ and $c_i \neq c_{i+1}$ for all $1 \leq i \leq r-1$. For example, $00311120=0^23^11^32^10^1$. Given a positive integer, $r$, let $P_q^r = \{c_1c_2\ldots c_r \in \Sigma_q^r : c_i \neq c_{i+1} \forall i \in [r-1]\}$ and $J_r^* =\{ \bfx \in \Sigma_q^* : \mathrm{r}(\bfx) = r\}$. 
We define the following map,
\begin{equation*}
    \Psi: J_r^* \to P_q^r \times \ZZ_+^r
\end{equation*}
such that if $\bfx= c_1^{u_1}c_2^{u_2}\ldots c_r^{u_{r}} \in J_r^* \subset \Sigma_q^*$ then $\Psi(\bfx)= (\bfc, \bfu)$ where $\bfc=c_1c_2\ldots c_r \in P_q^r \subset \Sigma_q^r$ and $\bfu= u_1u_2\ldots u_r \in \ZZ_+^r$.
For example, $\bfx=00311120 \in \Sigma_4^8$ and $\Psi(\bfx)=(\bfc,\bfu)$ where $\bfc=03120$ and $\bfu=21311.$ We note that $\Psi$ is a bijection and for each pair $(\bfc,\bfu) \in P_q^r \times \ZZ_+^r$, we can determine uniquely the sequence $\bfx \in J_r^*$ such that $\Psi(\bfx) = (\bfc,\bfu).$
In this paper, when describing the reconstruction algorithm, we also use a sub-map $\phi(\bfx) = \bfu$ specifically for the run-length.

Given the sequence $\bfx$, a sticky-insertion happens when a symbol in $\bfx$ is chosen, and another exact same symbol is inserted into $\bfx$, right next to this chosen token.
For example, given the sequence $\bfx=00311120$, if symbol 3 is inserted into $\bfx$ at the 3rd position to obtain $\bfy=003311120$ then we say a sticky-insertion occurs at the 3rd position. Let $\Psi(\bfx)=(\bfc,\bfu).$
If a sticky-insertion occurs in the sequence $\bfx$ to obtain $\bfy$ then $\Psi(\bfy) =(\bfc,\bfv)$ where there is only one index $i$ such that $v_i-u_i=1$. 
That is, there is no error in $\bfc$ and
the length of one run of $\bfx$ is increased by one. 
A sticky-deletion is an error that a symbol in a run of length at least two is deleted. We note that, once a sticky-deletion occurs in the sequence $\bfx$, the length of one run of $\bfx$ is decreased by one but it is not possible to delete the whole run. 

Given two positive integers $t,s,$ let the sticky-insdel\footnote{The word ``insdel'' is short for ``both insertion and deletion''.} ball $\cB_{t,s}(\bfx)$ be the set of all sequences obtained from $\bfx$ with at most $t$ sticky-insertions and $s$ sticky-deletions. 
More specifically, let $\cB_{t,s}(\bfx)$ denote the sticky-insdel error ball centered at $\bfx$ with radii $t$ and $s.$ A channel is called the $(t,s)$-sticky-insdel channel if a vector $\bfx$ is transmitted and a vector $\bfy \in \cB_{t,s}(\bfx)$ is received. Given two sequences, $\bfx_1$ and $\bfx_2$, we define the intersection of two sticky-insdel balls $\cB_{t,s}(\bfx_1,\bfx_2)= \cB_{t,s}(\bfx_1) \cap \cB_{t,s}(\bfx_2)$. Let $|\cB_{t,s}(\bfx_1,\bfx_2)|$ be the size of the intersection $\cB_{t,s}(\bfx_1,\bfx_2)$. In this work, we are interested in the problem of reconstructing a sequence $\bfx \in \Sigma_q^*$, knowing $M$ different sequences $\bfy_1,\ldots,\bfy_{M} \in \cB_{t,s}(\bfx)$. Since sticky-insertion and sticky-deletion errors do not change the number of runs of the sequence, we only consider $\bfx \in J_r^*$, given $r.$
Let
 $$N_{t,s}(r) = \max_{\bfx_1,\bfx_2 \in J_r^*} |\cB_{t,s}(\bfx_1,\bfx_2)|$$
be the maximal intersection of two sticky-insdel balls. If a sequence $\bfx$ is transmitted in the $(t,s)$-sticky-insdel channel, $N_{t,s}(r)+1$ distinct channel outputs are necessary and sufficient to guarantee a unique reconstruction of the transmitted sequence $\bfx.$
In the sequence reconstruction problem, we are interested in determining $N_{t,s}(r)$ and also in designing an algorithm to reconstruct $\bfx$ from $N_{t,s}(r)+1$ distinct sequences $\bfy_1,\ldots,\bfy_{N_{t,s}(r)+1} \in \cB_{t,s}(\bfx)$.

Given positive integers $r,t,s$ and $\bfu=u_1u_2\ldots u_r \in \ZZ_+^r,$ we define the following set 
 \begin{align*} 
    \cA_{t,s}(\bfu) = \left\{ (v_1, v_2 \ldots, v_r):  \vphantom{\sum_{i=1}^{r} s_i = s} \right.&\left. \sum_{i=1}^r \max \{ 0,u_i-v_i\} \le s, \vphantom{\sum_{i=1}^{r} s_i \leq s} \right. \nonumber \\
    & \left.    \sum_{i=1}^r \max \{ 0,v_i-u_i\} \le t \right\}.
\end{align*}
The set $\cA_{t,s}(\bfu)$ is called the asymmetric error ball centered at $\bfu$ with radii $t$ and $s.$ 

Given two sequences, $\bfx_1$ and $\bfx_2$, we have $\Psi(\bfx_1)=(\bfc_1,\bfu_1)$ and $\Psi(\bfx_2)=(\bfc_2,\bfu_2).$ We observe that $\bfy \in \cB_{t,s}(\bfx_1)$ if and only if $\Psi(\bfy)=(\bfc_1,\bfv)$ and $\bfv \in \cA_{t,s}(\bfu_1)$. 
Hence, $|\cB_{t,s}(\bfx_1)|=|\cA_{t,s}(\bfu_1)|$. Furthermore, $|\cB_{t,s}(\bfx_1,\bfx_2)|=|\cA_{t,s}(\bfu_1,\bfu_2)|$ where $\cA_{t,s}(\bfu_1,\bfu_2)=\cA_{t,s}(\bfu_1) \cap \cA_{t,s}(\bfu_2)$ is the intersection of two asymmetric error balls. So, 
$$N_{t,s}(r)=\max_{\bfu_1,\bfu_2 \in \ZZ_+^r} |\cA_{t,s}(\bfu_1,\bfu_2)|.$$
When only sticky-insertion occurs, that is $s=0$, an exact formula for  
$N_{t,0}(r)$ is known in \cite{YS19}. In particular, $N_{t,0}(r)=\binom{t+r-1}{r} +1$. Furthermore, it is straightforward to recover the original sequence from at least $N_{t,0}(r)$ distinct erroneous sequences. However, when both sticky-insertions and sticky-deletions occur in the channel, the value $N_{t,s}(r)$ is not known. In this paper, we provide an explicit formula of $N_{t,s}(r)$ in Section \ref{ssc:lower-bound-max-sticky-insdel-intersection}. Before that, we compute the maximum size of sticky-insdel balls in Section \ref{sc:max-sticky-insdel-size}. Finally, we
provide an algorithm to recover the original sequence in Section \ref{sc:reconstruction-uniqueness}.

\section{Maximum size of sticky insdel ball} \label{sc:max-sticky-insdel-size}
In this subsection, we will define a quantity $A_{t, s}(r)$ recursively, then claim that the maximal size of the asymmetric errors balls, $\max_{\bfu \in \ZZ_+^r} |\cA_{t, s}(\bfu)$, is exactly equal to $A_{t, s}(r)$. 
Then, equivalently, it means that the maximal size of the sticky insdel errors balls, $\max_{\bfx \in J_r^*} |\cB_{t, s}(\bfx)|$, is also equivalent to $A_{t, s}(r)$.
Later in Section~\ref{sc.required number}, this quantity $A_{t, s}(r)$ is further used as a building block for the formula to calculate the maximal intersection of two sticky-insdel balls $N_{t, s}(r)$.
For non-negative integers $t,s$, for all $r > 0,$ the recursive formula $A_{t,s}(r)$ is shown in Equation~\eqref{eq1-recursive-maxsize}, with the base cases being $A_{t, s}(1) = t + s + 1$ and $A_{t, s}(r) = 0$ for all $t, s < 0$.
\begin{align} 
    A_{t,s}(r) &= \sum_{i=1}^{t} A_{t-i,s}(r-1) + \sum_{i=1}^{s} A_{t,s-i}(r-1) \nonumber\\
    &+ A_{t,s}(r-1). \label{eq1-recursive-maxsize}
\end{align}
In this subsection, we will prove that $\max_{\bfu \in \ZZ_+^r} |\cA_{t,s}(\bfu)| = A_{t, s}(r)$ by first proving Lemma~\ref{lm1:max-ball-size-characterization}, which will imply that each $(t, s)$-sticky-insdel ball centered at $\bfu$, where all coordinates $u_i$ are at least $s+1$, has the maximum ball size.
Then, we show that these balls' size are exactly equal to $A_{t, s}(r)$ in Lemma~\ref{lm2:max-ball-size-large-tuple}, which would complete the section.
We formalize the main result of this section in Theorem~\ref{thm.insdel.ballsize}.
\begin{theorem} \label{thm.insdel.ballsize}
For all $t,s \in \NN$ and $r \in \ZZ_+$, we have:
\begin{equation}\label{eq2-max}
    \max_{\bfu \in \ZZ_+^r} |\cA_{t,s}(\bfu)| = A_{t,s}(r)
\end{equation}
\end{theorem}

We start the proof with Lemma~\ref{lm1:max-ball-size-characterization}, which implies full characterization for sticky insdel balls with maximum size.

\begin{lemma} \label{lm1:max-ball-size-characterization}
    For $\bfu =u_1u_2\ldots u_r \in \ZZ_+^r$, let $\bfu + s = (u_1+s, u_2+s, ..., u_r+s)$. The following inequality holds, $$\left| \cA_{t,s}(\bfu) \right| \le \left| \cA_{t,s}(\bfu+s) \right|.$$  
\end{lemma}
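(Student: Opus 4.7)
The plan is to exhibit an explicit injection from $\cA_{t,s}(\bfu)$ into $\cA_{t,s}(\bfu+s)$, from which the cardinality inequality follows immediately. The natural candidate is the coordinatewise shift $\phi: \bfv \mapsto \bfv + s := (v_1+s, v_2+s, \ldots, v_r+s)$. Since adding $s$ to a positive integer yields a positive integer, $\phi$ is well-defined as a map $\ZZ_+^r \to \ZZ_+^r$, and it is manifestly injective as a pure translation.

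The only real step is to check that $\phi(\bfv) \in \cA_{t,s}(\bfu+s)$ whenever $\bfv \in \cA_{t,s}(\bfu)$, and this follows from the fact that coordinatewise differences are translation-invariant: $(u_i+s)-(v_i+s)=u_i-v_i$ and $(v_i+s)-(u_i+s)=v_i-u_i$ for every $i$. Consequently, both of the sums
$$\sum_{i=1}^{r}\max\{0,(u_i+s)-(v_i+s)\} \quad\text{and}\quad \sum_{i=1}^{r}\max\{0,(v_i+s)-(u_i+s)\}$$
that define membership in $\cA_{t,s}(\bfu+s)$ coincide with the corresponding sums that witness $\bfv \in \cA_{t,s}(\bfu)$, so they are bounded by $s$ and $t$ respectively. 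Injectivity of $\phi$ then delivers $|\cA_{t,s}(\bfu)|\le|\cA_{t,s}(\bfu+s)|$.

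There is no serious technical obstacle here; the entire content of the lemma is the translation invariance of the two defining inequalities, combined with the implicit positivity constraint $v_i \ge 1$. The reason the shift by exactly $s$ is the right choice is that after adding $s$ to every coordinate of $\bfu$, each component $u_i+s$ has enough slack to absorb up to $s$ deletions without colliding with $v_i\ge 1$; for the original $\bfu$ this positivity constraint can be binding when some $u_i$ is small, which is precisely the source of any strictness in the inequality. Viewed this way, the lemma is the natural preparatory step for computing $\max_{\bfu\in\ZZ_+^r}|\cA_{t,s}(\bfu)|$, since it allows us to restrict attention to centers whose every coordinate is at least $s+1$, i.e., to the regime where the $\max\{0,\cdot\}$ in the deletion constraint never saturates.
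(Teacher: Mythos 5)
Your proof is correct and is essentially identical to the paper's: both exhibit the injection $\bfv \mapsto \bfv + s$ and verify membership in $\cA_{t,s}(\bfu+s)$ via the translation invariance of the two defining sums. Your closing remarks on why the shift by exactly $s$ matters are accurate commentary but not needed for the lemma itself.
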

\begin{IEEEproof}
    Let $\bfw = (w_1, \ldots, w_r) \in \cA_{t,s}(\bfu)$, we will prove that $\bfw + s \in \cA_{t,s}(\bfu+s)$. 
    Indeed, by definition:
    \begin{equation*}
        \sum_{i=1}^r \max \{0, u_i - w_i \} \le s, \sum_{i=1}^r \max \{0, w_i - u_i \} \le t .
    \end{equation*}
    Hence, $\bfw +s$ also satisfies:
    \begin{align*}
        \sum_{i=1}^r \max \{0, (u_i + s) - (w_i + s) \} \le s, \\ \sum_{i=1}^r \max \{0, (w_i + s) - (u_i + s) \} \le t,
    \end{align*}

    which means that $\bfw+s \in \cA_{t,s}(\bfu+s)$. So we found an embedding $\bfw \in \cA_{t,s}(\bfu) \mapsto \bfw+s \in \cA_{t,s}(\bfu+s)$ from $\cA_{t,s}(\bfu)$ to $\cA_{t,s}(\bfu + s)$, and conclude that $\left| \cA_{t,s}(\bfu) \right| \le \left| \cA_{t,s}(\bfu+s) \right|$.
\end{IEEEproof}

Now, in Lemma~\ref{lm2:max-ball-size-large-tuple}, we calculate the exact size for $(t, s)$-sticky-insdel balls with centers being points with coordinates at least $s+1$, that are implied by Lemma~\ref{lm1:max-ball-size-characterization} to be maximum.

\begin{lemma} \label{lm2:max-ball-size-large-tuple}
	For $r \in \ZZ_+$ and $t, s \in \NN$, for each $\bfu=u_1u_2\ldots u_r \in \ZZ_+^{r}$ such that $u_i \ge s+1, \forall i \in [r]$, we have $\left|\cA_{t,s}(\bfu)\right| = A_{t,s}(r)$.
\end{lemma}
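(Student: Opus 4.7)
My plan is to prove this by induction on $r$, splitting the asymmetric error ball $\cA_{t,s}(\bfu)$ according to the value of the last coordinate $v_r$ of $\bfv = (v_1,\dots,v_r) \in \cA_{t,s}(\bfu)$. The hypothesis $u_i \ge s+1$ is important for two reasons: first, it guarantees that the implicit positivity constraint $v_i \in \ZZ_+$ is automatic (since any valid $\bfv$ must satisfy $v_i \ge u_i - s \ge 1$), so one can count integer tuples without worrying about boundary effects; second, it is stable under the reduction of parameters that appears in the induction step.

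For the base case $r = 1$, the conditions reduce to $u_1 - s \le v_1 \le u_1 + t$, giving $t + s + 1 = A_{t,s}(1)$ values, all of which are positive integers because $u_1 - s \ge 1$.

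For the inductive step, assume the equality holds for $r-1$. Given $\bfu = (u_1,\dots,u_r)$ with each $u_i \ge s+1$, I would partition $\cA_{t,s}(\bfu)$ into three disjoint parts according to $v_r$:
\begin{itemize}
\item If $v_r = u_r$, coordinate $r$ contributes nothing to either sum, so the number of admissible $(v_1,\dots,v_{r-1})$ is $|\cA_{t,s}(\bfu')|$ where $\bfu' = (u_1,\dots,u_{r-1})$.
\item If $v_r = u_r + i$ for some $1 \le i \le t$, coordinate $r$ contributes exactly $i$ to the insertion budget and $0$ to the deletion budget, so the count is $|\cA_{t-i,s}(\bfu')|$.
\item If $v_r = u_r - i$ for some $1 \le i \le s$, coordinate $r$ contributes $i$ to the deletion budget and $0$ to the insertion budget, yielding $|\cA_{t,s-i}(\bfu')|$; note $v_r \ge 1$ since $u_r \ge s+1$.
\end{itemize}
In each case, $\bfu'$ still satisfies the hypothesis $u'_j \ge s+1 \ge s'+1$ where $s' \in \{s, s-i\}$ is the reduced deletion radius, so the induction hypothesis applies and each subcount equals the corresponding $A_{t',s'}(r-1)$. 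Summing the three contributions reproduces the recursion \eqref{eq1-recursive-maxsize}, completing the induction.

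The argument is essentially bookkeeping; the only point needing care is verifying that the hypothesis $u_j \ge s+1$ survives the reduction to smaller parameters, which is the main reason I do not expect a serious obstacle beyond tracking cases. If the definition of $\cA_{t,s}$ were taken over $\ZZ^r$ rather than $\ZZ_+^r$, the condition $u_i \ge s+1$ would be unnecessary; its role here is solely to guarantee that no tuples are lost to the positivity constraint, so that counting agrees with the clean recursion.
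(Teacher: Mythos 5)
Your proposal is correct and follows essentially the same route as the paper: induction on $r$ with the base case $|\cA_{t,s}((u_1))| = t+s+1$, then a partition of the ball according to the value of a single coordinate, whose three ranges reproduce the recursion \eqref{eq1-recursive-maxsize}. The only (immaterial) difference is that you split on the last coordinate while the paper splits on the first, and you make explicit the observation that the hypothesis $u_j \ge s+1$ is preserved under the reduced deletion radius, which the paper leaves implicit.
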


\begin{IEEEproof}
    We will prove using induction with the base case when $r = 1$. 
    
    \textit{Base case.} Let $\bfu \in \ZZ_+^{r}$ be arbitrary tuples of length $1$ and $\bfu = (a), a \ge s+1$. The $(t, s)$-sticky-insdel ball centered at $\bfu$ is then:
    \begin{equation}
        \cA_{t,s}(\bfu) = \left\{ \alpha: a-s \le \alpha \le a+t\right\}
    \end{equation}
    which has size $s+t+1 = A_{t,s}(1)$.

    \textit{Inductive step.} Assume that, for $r \ge 1$ and for all $t, s \ge 0$, given any $r$-tuple $\bfu = (u_1, \ldots, u_r) \in \ZZ^{r}$ such that $u_i \ge s+1, \forall i \in [r]$, we have:
	\begin{equation*}
		\left|\cA_{t,s}(\bfu)\right| = A_{t,s}(r).
	\end{equation*}
    Then given integers $t, s \ge 0$, for any tuple $\bfv = (v_1, \ldots, v_{r+1}) \in \ZZ^{r+1}$ such that $v_i \ge s+1, \forall i \in [r+1]$, we will prove that:
	\begin{equation*}
		\left|\cA_{t,s}(\bfv)\right| = A_{t,s}(r+1). 
	\end{equation*}
    Let $\bfv' = (v_2, \ldots, v_{r+1})$ be the sub tuple of $\bfv$ without its first entry. Since every entry of $\bfv'$ is at least $s+1$, by the induction hypothesis, we have $\left| \cA_{t,s}(\bfv') \right| = A_{t,s}(r)$.
    For each $1 \le i \le t$, let $\cS_{v_1+i} \subseteq \cA_{t,s}(\bfv)$ be the subset contains all tuples that starts with $v_1+i$. Then, since $\vw \in \cA_{t-i,s} (\bfv') \mapsto (u_1+i) \circ \vw \in \cS_{v_1+i}$ is an one-to-one bijection from $\cA_{t-i,s}(\bfv')$ to $\cS_{v_1+i}$, we have $\left| \cA_{t-i,s}(\bfv') \right| = \left| \cS_{v_1+i} \right|$. 
    Also, for any $1 \le i \le s$, the one-to-one bijection from $\cA_{t,s-i}(\bfv')$ to $\cS_{v_1-i}$ is $\vw \in \cA_{t,s-i}(\bfv') \mapsto (v_1-i) \circ \vw \in \cS_{v_1-i}$, which means that $\left| \cA_{t,s-i}(\bfv') \right| = \left| \cS_{v_1-i} \right|$. Similarly, we obtained $\left| \cA_{t,s}(\bfv') \right| = \left| \cS_{v_1} \right|$. 
    Since we can partition the $(t, s)$-sticky-insdel ball of $\bfv$ into disjoint subsets of tuples with the starting entries range from $v_1-s$ to $v_1+t$, we have:
    \begin{align*}
        \left|\cA_{t,s}(\bfv)\right| &= \sum_{i=1}^{t} \left| \cS_{v_1+i} \right| + \sum_{i=1}^{s} \left| \cS_{v_1-i} \right| + \left| \cS_{v_1} \right|\\
        &= \sum_{i=1}^{t} \left| \cA_{t-i,s}(\bfv') \right| + \sum_{i=1}^{s} \left| \cA_{t,s-i}(\bfv') \right| + \left| \cA_{t,s}(\bfv') \right| \\
        &= \sum_{i=1}^{t} A_{t-i,s}(r) + \sum_{i = 1}^{s} A_{t,s-i}(r) + A_{t,s}(r) \\
        &= A_{t,s}(r+1).
    \end{align*}
\end{IEEEproof}

From Lemma \ref{lm1:max-ball-size-characterization} and Lemma \ref{lm2:max-ball-size-large-tuple}, given $r \in \NN$, $t,s \in \ZZ_+$, for all $\bfu \in \ZZ_+^r$, we see that $|\cA_{t,s}(\bfu)| \leq A_{t,s}(r)$ and thus $\max_{\bfu \in \ZZ_+^r} |\cA_{t,s}(\bfu)| \leq A_{t,s}(r)$.
From Lemma \ref{lm2:max-ball-size-large-tuple}, we can obtain the lower bound $\max_{\bfu \in \ZZ_+^r} |\cA_{t,s}(\bfu)| \geq A_{t,s}(r)$, which would be combined with the previous inequality to obtain Theorem~\ref{thm.insdel.ballsize}.
To compute an explicit formula of $A_{t,s}(r)$, we first provide a generating function of $A_{t,s}(r)$ and then extract its coefficients. 
Given $r \in \ZZ_+,$ let  $f_r(x, y) = \sum_{t, s \ge 0} A_{t,s}(r) x^t y^s$ be a generating function of $A_{t,s}(r)$. 
Using Equation \ref{eq1-recursive-maxsize}, we obtain
\begin{align*}
    &f_{r+1}(x, y) \\
    &= \sum_{t, s \ge 0} \left( \sum_{i=1}^t A_{t-i,s} (r) + \sum_{i=1}^s A_{t,s-i} (r) + A_{t,s} (r) \right) x^t y^s
\end{align*}
which can be rewritten as
\begin{equation}\label{eq-generatingfunction-r=1}
  f_{r+1}(x, y) = \frac{1-xy}{(1-x)(1-y)} f_r(x, y).
\end{equation}
Furthermore, since $A_{t,s}(1) =t+s+1$, we obtain
\begin{equation}
    f_1(x,y)= \sum_{t, s \ge 0} (1 + t + s) x^t y^s = \frac{1-xy}{(1-x)^2 (1-y)^2}.
\end{equation}
Therefore, we obtain a formula of the generating function of $A_{t,s}(r)$ as follows.
\begin{equation}\label{eq-generating-final-r}
    f_r(x, y) = \frac{(1-xy)^r}{(1-x)^{r+1} (1-y)^{r+1}}.
\end{equation}
Extracting coefficient of the generating function (\ref{eq-generating-final-r}), we obtain an explicit formula of $A_{t,s}(r)$ as follows.
\begin{equation}\label{eq.Atsr}
    A_{t,s}(r) = \sum_{i=0}^{\min (t, s, r)} \binom{r+t-i}{r} \binom{r+s-i}{r} \binom{r}{i} (-1)^{i}.
\end{equation}

\section{Minimum number of distinct outputs} \label{sc.required number}
In this section, our goal is to provide an exact formula of the value of $N_{t,s}(r) =\max_{\bfu,\bfv \in \ZZ_+^r} |\cA_{t,s}(\bfu,\bfv)| +1$.
The main result in this subsection is stated formally as follows.

\begin{theorem}\label{thm.insdel.exact}
Given $t,s,r$, then $N_{t, s}(r) = M$, with $M$ defined as:
    \begin{equation}
        M= \sum_{i=1}^{t} A_{t-i,s}(r-1) + \sum_{i=1}^{s} A_{t,s-i}(r-1)+1.
    \end{equation}
    where $A_{t,s}(r)$ is defined in Equation (\ref{eq.Atsr}).
\end{theorem}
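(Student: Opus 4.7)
The plan is to bound $\max_{\bfu_1,\bfu_2\in\ZZ_+^r}|\cA_{t,s}(\bfu_1,\bfu_2)|$ matchingly from above and below by $M-1$; the conclusion $N_{t,s}(r)=M$ then follows from the definition of $N_{t,s}(r)$ adopted at the start of this section.

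For the lower bound I would exhibit the pair $\bfu_1=(s+1,s+1,\ldots,s+1)$ and $\bfu_2=(s+2,s+1,\ldots,s+1)$, differing only in the first coordinate by one, and compute $|\cA_{t,s}(\bfu_1,\bfu_2)|$ by partitioning on the value $v_1$ of a generic element $\bfv$. Three cases arise. When $v_1=s+1+k$ with $1\le k\le t$, the insertion cost $k$ at position one is binding for $\bfu_1$, and the suffix lies in an asymmetric ball of radii $(t-k,s)$ about the length-$(r-1)$ all-$(s+1)$ tuple, contributing $A_{t-k,s}(r-1)$ by Lemma~\ref{lm2:max-ball-size-large-tuple}. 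When $v_1=s+1$, $\bfu_2$ incurs exactly one deletion at position one, leaving $A_{t,s-1}(r-1)$ completions. When $v_1=s+1-j$ with $1\le j\le s-1$, the deletion cost $j+1$ at position one is binding for $\bfu_2$, leaving $A_{t,s-j-1}(r-1)$ completions. Summing yields $\sum_{k=1}^{t}A_{t-k,s}(r-1)+\sum_{i=1}^{s}A_{t,s-i}(r-1)=M-1$, as required.

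For the upper bound, I would reduce an arbitrary distinct pair to the unit-difference case above. First, an analogue of Lemma~\ref{lm1:max-ball-size-characterization}: the componentwise shift $\bfv\mapsto\bfv+\bfc$ injects $\cA_{t,s}(\bfu_1,\bfu_2)$ into $\cA_{t,s}(\bfu_1+\bfc,\bfu_2+\bfc)$ for any $\bfc\in\NN^r$, so I may assume every coordinate of both tuples is at least $s+1$. Next, for a single-coordinate difference $\bfu_2=\bfu_1+d\cdot e_j$ with $d\ge 1$, the same three-case partition gives a closed-form expression in $d$, and monotonicity of $A_{t,s}(r)$ in each parameter (immediate from the recurrence~(\ref{eq1-recursive-maxsize})) shows this expression is non-increasing in $d$, attaining its peak $M-1$ at $d=1$.

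The main obstacle is the residual case in which $\bfu_1$ and $\bfu_2$ disagree in several coordinates at once. My plan is a coordinate-by-coordinate descent: pick any $j\in\supp(\bfu_2-\bfu_1)$, let $\bfu_2'$ coincide with $\bfu_2$ except that $u_{2,j}$ is moved one step toward $u_{1,j}$, and construct an injection $\cA_{t,s}(\bfu_1,\bfu_2)\hookrightarrow\cA_{t,s}(\bfu_1,\bfu_2')$. The subtle point is that the identity map on $\bfv$ can simultaneously free a unit of deletion budget against $\bfu_2'$ and consume a unit of insertion budget against $\bfu_1$, so the injection must occasionally re-route budget across coordinates; this bookkeeping carries the technical weight of the argument. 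Iterating the descent collapses any nonzero difference vector to a single $\pm e_j$, which reduces to the case just handled and completes the proof.
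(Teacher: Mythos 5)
Your lower bound is correct and is obtained by a cleaner route than the paper's: you use the pair $\bfu_1=(s+1,\ldots,s+1)$, $\bfu_2=(s+2,s+1,\ldots,s+1)$ differing in a single coordinate by one, whereas the paper uses a pair differing in two coordinates. Your three-case partition on $v_1$ checks out: the cases contribute $\sum_{k=1}^{t}A_{t-k,s}(r-1)$, $A_{t,s-1}(r-1)$, and $\sum_{j=1}^{s-1}A_{t,s-j-1}(r-1)$ respectively, and the last two combine into $\sum_{i=1}^{s}A_{t,s-i}(r-1)$, giving exactly $M-1$; Lemma~\ref{lm2:max-ball-size-large-tuple} applies to the suffixes since all entries are at least $s+1$.

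The upper bound is where there is a genuine gap. Your plan reduces an arbitrary pair to the unit-difference case by a coordinate-by-coordinate descent, and the engine of that descent is an injection $\cA_{t,s}(\bfu_1,\bfu_2)\hookrightarrow\cA_{t,s}(\bfu_1,\bfu_2')$ when one coordinate of $\bfu_2$ is moved one step toward $\bfu_1$. You correctly observe that the identity map fails (if $v_j\ge u_{2,j}$ and the insertion budget against $\bfu_2$ is already saturated, decreasing $u_{2,j}$ by one pushes $\bfv$ out of the ball), but the ``re-routing of budget across coordinates'' that you defer is precisely the content of the claim, and it is not clear that such an injection exists coordinate step by coordinate step; this stepwise monotonicity of the intersection size is a strictly stronger statement than the theorem itself and nothing in your sketch constructs it. Even in the single-coordinate case with gap $d$, the middle block $\sum_{j=0}^{d}A_{t-j,\,s-d+j}(r-1)$ must be compared term by term against the $d=1$ expression, which needs the monotonicity of $A_{t,s}(r)$ in $t$ and $s$ proved by induction on $r$ (routine, but not literally ``immediate'' from one application of the recurrence).

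The paper avoids the descent entirely with a direct counting argument that you could adopt to close the gap. Suppose $\bfu\neq\bfu'$ and pick a coordinate $j$ with $u_j<u_j'$. Every $\bfv\in\cA_{t,s}(\bfu,\bfu')$ with $v_j=k\le u_j$ pays a deletion cost of $u_j'-k\ge 1$ at coordinate $j$ against $\bfu'$, so deleting that coordinate maps it injectively into $\cA_{t,\,s-u_j'+k}(\bfu'')$ for the punctured center, giving at most $A_{t,\,s-u_j'+k}(r-1)$ such $\bfv$; every $\bfv$ with $v_j=k>u_j$ pays an insertion cost $k-u_j\ge 1$ against $\bfu$, giving at most $A_{t-k+u_j,\,s}(r-1)$. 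Summing over $k$ bounds $|\cA_{t,s}(\bfu,\bfu')|$ by $\sum_{i=1}^{s}A_{t,s-i}(r-1)+\sum_{i=1}^{t}A_{t-i,s}(r-1)=M-1$ in one shot, for an arbitrary pair, with no reduction step. This is exactly the content of Lemmas~\ref{lm:reconstruction-conditions} and~\ref{lm:reconstruction-uniqueness}, phrased there as a uniqueness-of-reconstruction statement; I recommend replacing your descent with this argument.
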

To prove Theorem \ref{thm.insdel.exact}, we prove the lower bound of $N_{t,s}(r)$ in Subsection \ref{ssc:lower-bound-max-sticky-insdel-intersection} and show the upper bound of $N_{t,s}(r)$ in Subsection \ref{sc:reconstruction-uniqueness}. We note that from Theorem \ref{thm.insdel.exact}, Equation~\eqref{eq1-recursive-maxsize}, and Equation~\eqref{eq.Atsr} we can obtain an exact formula of the minimum number of distinct outputs required for uniquely reconstruction as follows.
{\small{
\begin{align*}
        &N_{t,s}(r)\\ &=1+ \sum_{i=0}^{\min (t, s, r)} \binom{r+t-i}{r} \binom{r+s-i}{r} \binom{r}{i} (-1)^{i}\\
        &- \sum_{i=0}^{\min (t, s, r-1)} \binom{r-1+t-i}{r} \binom{r-1+s-i}{r} \binom{r-1}{i} (-1)^{i}.
\end{align*}
}
}
\subsection{Lower bound}\label{ssc:lower-bound-max-sticky-insdel-intersection}
In this section, our goal is to prove the lower bound of the value of  $N_{t,s}(r)$ as in Theorem \ref{thm.insdel.exact}. In particular, we will prove the following result,
    \begin{equation}\label{thm.insdel.min}
        N_{t,s}(r) \geq \sum_{i=1}^{t} A_{t-i,s}(r-1) + \sum_{i=1}^{s} A_{t,s-i}(r-1)+1.
    \end{equation}
To prove Equation~\eqref{thm.insdel.min}, we show that there are two vectors $\bfu_0,\bfv_0 \in \ZZ_+^r$ such that
    \begin{equation}\label{eq.insdel.min}
        |\cA_{t,s}(\bfu_0, \bfv_0)| \ge  \sum_{i=1}^{t} A_{t-i,s}(r-1) + \sum_{i=1}^{s} A_{t,s-i}(r-1).
    \end{equation}
Indeed,  for $r \ge 3, n \ge (s+1) * r+1$, let $n = (s+1) * r + k$. Let $\bfu_0 = (u_1, \ldots, u_r)$ such that $u_1 = s+2, u_2 = s+1, u_3 = s+k$ and
       $ u_i = s+1, \forall i > 3$. 
Let $\bfv_0 = (v_1, \ldots, v_r)$ such that
   $ v_1 = s+1, v_2 = s+2, v_3 = s+k,$ and
       $v_i = s+1, \forall i > 3.$ 
We now compute $|\cA_{t,s}(\bfu_0, \bfv_0) |$ to obtain Inequality (\ref{eq.insdel.min}).

      Let $\cS = \cA_{t,s}(\bfu_0, \bfv_0)$. We can decompose $\cS$ into disjoint subsets $\cS_{\alpha} \subseteq \cS$ which contains all the tuples in $\cS$ that begin with $\alpha$. Note that since there are $t$ insertions and $s$ deletions, $\alpha$ can only range from $u_1-s = 2$ to $v_1+t = t+s+1$.
    \begin{align*}
        \left| \cS \right| &= \sum_{\alpha=2}^{t+s+1} \left| \cS_{\alpha} \right| = \sum_{\alpha=2}^{s+1} \left| \cS_{\alpha} \right| + \sum_{\alpha=s+2}^{t+s+1} \left| \cS_{\alpha} \right| \\ &= \sum_{i=1}^{s} \left| \cS_{u_1-i} \right| + \sum_{i=1}^{t} \left| \cS_{v_1+i} \right|.
    \end{align*}

    We now show that, for every $1 \le i \le t$, the size $|\cS_{v_1+i}| \ge  A_{t-i,s}(r-1)$. Let $\bfw_0 = (w_1, \ldots, w_r) \in \ZZ_+^r$ such that $w_{1} = s+i+1$, $w_{2} = s+2, w_{3} = s+k$, and $w_{j} = s+2$ for all $j>1$.
   From $\bfu_0$ we can create $\bfw_0$ by adding $i-1$ into $u_1$ and adding $1$ into $u_2$, whereas from $\bfv_0$ we can create $\bfw_0$ by adding $i$ into $v_1$. Let $\bfw_0' = (w_2, \ldots, w_r)$ be the sub-vector of $\bfw_0$ beginning from the second element of $\bfw_0$, then for each vector in $\cA_{t-i,s}(\vw_0')$, prepending $(s+1+i)$ into it gives us a unique tuple in $\cS_{s+1+i}$. This shows 
   that $$ |\cS_{v+i}| = \left| \cS_{s+i+1} \right| \ge \left| \cA_{t-i,s}(\bfw_0') \right| = A_{t-i,s}(r-1).$$
    
    For each $1 \le i \le s$, in a similar manner, we can show that $\left| \cS_{u_1-i} \right| = A_{t,s-i}(r-1)$. Then after summing up, we get:
    \begin{align*}
        \left| \cS \right|  &= \sum_{i=1}^{t} \left| \cS_{v_1+i} \right| + \sum_{i=1}^{s} \left| \cS_{u_1-i} \right| \\
        &\ge \sum_{i=1}^{t} A_{t-i,s}(r-1) + \sum_{i=1}^{s} A_{t,s-i}(r-1)
    \end{align*}
    Hence, Inequality~\eqref{eq.insdel.min} holds and thus, the lower bound of the minimum number of distinct outputs is given by Equation~\eqref{thm.insdel.min}. 
In the next section, we show that the above lower bound is tight by providing an efficient algorithm to reconstruct the transmitted vector from $M$ distinct outputs, where $M=N_{t,s}(r)$ as in Theorem \ref{thm.insdel.exact}.

\subsection{Upper Bound} \label{sc:reconstruction-uniqueness}
In this subsection, our goal is to prove the upper bound of $N_{t,s}(r)$, that is $N_{t,s}(r) \leq M.$ 
To obtain the upper bound, we show that it is possible to reconstruct $\bfx$ uniquely from $M$ distinct tuples $\bfy_1, \ldots, \bfy_M$, where $\bfy_i \in \cB_{t,s}(\bfx)$ for all $1 \leq i \leq M$.  
We first present the following properties of the vector $\bfu=\phi(\bfx)$.
\begin{lemma} \label{lm:reconstruction-conditions}
    For each $i$, let $\bfv_i = (v_{i,1}, \ldots, v_{i, r}) = \phi(\bfy_i)$. Let $\bfx$ be a possible transmitted vector and $\bfu = (u_1, \ldots, u_r) = \phi(\bfx)$. For each $j \in [r]$, let $a_j = \min_{i=1}^M v_{i, j}$, $b_j = \max_{i=1}^M v_{i, j}$, and $c_{j, k}$ be the number of indices $i \in [M]$ such that $v_{i, j} = k$. We note that $\sum_{k = a_j}^{b_j} c_{j, k} = M$.
    Then, $u_j$ satisfies the following properties:
    \begin{enumerate}
        \item \label{it:condition-0} $ b_j-t \leq u_j \leq a_j+s$.
        \item \label{it:condition-1} For any $a_j \le k < u_j$, we have:
        \begin{equation}
            c_{j, k} \le A_{t,s-u_j+k}(r-1). \label{eq:condition-1}\\
        \end{equation}
        \item \label{it:condition-2} For any $u_j < k \le b_j$, we have:
        \begin{equation}
            c_{j, k} \le A_{t-k+u_j,s}(r-1). \label{eq:condition-2}
        \end{equation}
    \end{enumerate}
\end{lemma}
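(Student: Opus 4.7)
The plan is to translate everything from sequences to run-length tuples via the bijection $\Psi$, so that $\phi(\bfy_i) = \bfv_i \in \cA_{t,s}(\bfu)$ for each $i$, and then to read off each item from the pointwise, separable definition of $\cA_{t,s}$ together with the maximum-ball-size identity $\max_{\bfu \in \ZZ_+^{r-1}} |\cA_{t,s}(\bfu)| = A_{t,s}(r-1)$ established in Equation~(\ref{eq2-max}).

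Item~\ref{it:condition-0} is a warm-up. Since each term in the sums defining $\cA_{t,s}$ is nonnegative, for every fixed $i$ and $j$ the single summand $\max\{0, v_{i,j}-u_j\}$ is bounded by $\sum_k \max\{0,v_{i,k}-u_k\} \le t$, so $v_{i,j} - u_j \le t$; similarly $u_j - v_{i,j} \le s$. Taking the max over $i$ in the first inequality gives $b_j - t \le u_j$, and taking the min over $i$ in the second gives $u_j \le a_j + s$.

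For Items~\ref{it:condition-1} and~\ref{it:condition-2} the plan is to freeze the $j$-th coordinate at a value $k \ne u_j$ and project onto the other $r-1$ coordinates. Suppose $a_j \le k < u_j$ and put $\cI = \{i \in [M]: v_{i,j}=k\}$. For every $i \in \cI$, coordinate $j$ consumes exactly $u_j - k$ units of the deletion budget and $0$ units of the insertion budget, so the projection $\bfv_i^{(j)} \in \ZZ_+^{r-1}$ obtained by deleting the $j$-th entry lies in $\cA_{t,\, s - u_j + k}(\bfu^{(j)})$, where $\bfu^{(j)}$ is $\bfu$ with its $j$-th entry removed. Because the $\bfy_i$ are pairwise distinct and share the common run-pattern $\bfc$, the $\bfv_i$ are distinct; since all $i \in \cI$ agree at coordinate $j$, the projected tuples $\bfv_i^{(j)}$ remain pairwise distinct. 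Therefore $c_{j,k} = |\cI| \le |\cA_{t,\, s-u_j+k}(\bfu^{(j)})| \le A_{t,\, s-u_j+k}(r-1)$, which is Item~\ref{it:condition-1}. Item~\ref{it:condition-2} is the symmetric statement: for $u_j < k \le b_j$, coordinate $j$ now consumes $k - u_j$ units of the insertion budget, so the projected tuples lie in $\cA_{t-k+u_j,\, s}(\bfu^{(j)})$.

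The main obstacle is essentially bookkeeping rather than a genuine difficulty: one must check that fixing coordinate $j$ at value $k$ spends exactly $|u_j-k|$ units of one (and only one) of the two budgets at that coordinate, so that the projected tuples live in an asymmetric ball whose radii are exactly $(t, s-u_j+k)$ or $(t-k+u_j, s)$, and one must verify that distinctness is preserved by the projection. Both facts follow immediately from the coordinatewise, separable form of the constraints defining $\cA_{t,s}$, after which Equation~(\ref{eq2-max}) supplies the bound $A_{t,s}(r-1)$.
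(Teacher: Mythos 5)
Your proposal is correct and follows essentially the same route as the paper's own proof: Item~\ref{it:condition-0} from bounding a single summand of the ball constraints, and Items~\ref{it:condition-1} and~\ref{it:condition-2} by projecting the vectors with $v_{i,j}=k$ onto the remaining $r-1$ coordinates, observing that coordinate $j$ consumes exactly $|u_j-k|$ of the relevant budget, and invoking $|\cA_{t',s'}(\bfu')|\le A_{t',s'}(r-1)$. Your explicit check that distinctness survives the projection is a point the paper only asserts, but the argument is the same.
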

\begin{IEEEproof}
We now prove the first claim, that is $ b_j-t \leq u_j \leq a_j+s$.
   For each $i \in [M]$, since $\bfv_i \in \cA_{t,s}(\bfu)$, by definition we have:
        \begin{equation*}
            \sum_{j=1}^r \max \{ 0,u_j-v_{i, j}\} \le s.
        \end{equation*}
        Hence $u_j - v_{i, j} \le s$, that is $u_j \le v_{i, j} + s$. Since this is true for all $i$ from $1$ to $M$, 
        we have $u_j\le a_j + s$. Similarly, we also obtain $u_j \ge b_j - t$. Hence, the first claim is proven.

        Before we prove Claim 2 and Claim 3, we define the vector $\bfu'$ be the vector obtained from $\bfu$ after removing the $j$-the entry, that is $\bfu'=(u_1,\ldots,u_{j-1},u_{j+1},\ldots,u_r)$.
        Let $\bfv_{m_1}, \ldots, \bfv_{m_{c_{j, k}}}$ be all the vectors in $\left\{ \bfv_1, \ldots, \bfv_M \right\}$ such that:
        \begin{equation*}
            v_{m_i, j} = k, \forall i \in [c_{j, k}].
        \end{equation*}
        
        For each $i \in [c_{j,k}]$,
        \begin{equation*}
            \bfv_{m_{i}} = (v_{m_i, 1}, \ldots, v_{m_i, j-1}, k, v_{m_i, j+1}, \ldots, v_{m_i, r}).
        \end{equation*}
        Let $\bfv'_{m_{i}}$ be a subsequence obtained from $\bfv_{m_i}$ after removing the $j$-th entry, that is
        \begin{equation*}
            \bfv'_{m_{i}} = (v_{m_i, 1}, \ldots, v_{m_i, j-1}, v_{m_i, j+1}, \ldots, v_{m_i, r}).
        \end{equation*}

        By definition of the asymmetric error ball, we have
        \begin{align}
            &\sum_{l=1}^{r} \max \{ 0, u_l - v_{m_i, l} \} \le s, \forall i \in [c_{j, k}], \text{ and} \label{eq1.s} \\
            &\sum_{l = 1}^j \max \{ 0, v_{m_i, l} - u_l \} \le t, \forall i \in [c_{j, k}]. \label{eq1.t}
        \end{align}

   Now, to prove Claim 2, we consider the case $a_j \le k < u_j$, that is $u_j-k>0$. From Equation~\eqref{eq1.s}, we obtain        \begin{equation*}
            \sum_{l \neq j} \max \{ 0, u_l - v_{m_i, l} \} \le s - u_j + k, \forall i \in [c_{j, k}].
        \end{equation*}
   From Equation~\eqref{eq1.t}, we obtain
           \begin{equation*}
            \sum_{l \neq j} \max \{ 0, v_{m_i, l} - u_l \} \le t, \forall i \in [c_{j, k}].
        \end{equation*}
        
        From the above two equations, we obtain $\bfv'_{m_i} \in \cA_{t,s-u_j+k} (\bfu'), \forall i \in [c_{j, k}]$. We note that all $c_{j,k}$ vectors $\bfv'_{m_i}$ are distinct. 
        Therefore $ A_{t,s-u_j+k} \geq |\cA_{t,s-u_j+k} (\bfu')| \geq c_{j, k}$. So, Claim 2 is proven.

        To prove Claim 3, we consider the case $u_j < k \le b_j$, that is $u_j-k <0$. From Equation~\eqref{eq1.t}, we obtain
                \begin{equation*}
            \sum_{l \neq j} \max \{ 0, v_{m_i, l} - u_l \} \le t - k + u_j, \forall i \in [c_{j, k}].
        \end{equation*}
        From Equation~\eqref{eq1.s}, we obtain
                \begin{equation*}
            \sum_{l \neq j} \max \{ 0, u_l - v_{m_i, l} \} \le s, \forall i \in [c_{j, k}].
        \end{equation*}
        From the above two equations, we obtain $\bfv'_{m_i} \in \cA_{t-k+u_j,s} (\bfu'), \forall i \in [c_{j, k}]$. Therefore 
        $c_{j, k} \le |\cA_{t-k+u_j,s}(\bfu')|  \le A_{t-k+u_j,s}(r-1)$.
So, Claim 3 is proven.
\end{IEEEproof}
From Lemma \ref{lm:reconstruction-conditions}, we see that if $\bfx$ is the original vector then vector $\bfu =\phi(\bfx)$ must satisfy three properties. We now show that there is only one vector that satisfies all these properties and thus there is only one vector $\bfx$ such that $\bfy_i \in \cB_{t,s}(\bfx)$ for all $1\leq i \leq M.$
 We state the result formally as follows.

\begin{lemma}\label{lm:reconstruction-uniqueness}
Given all parameters as in Lemma \ref{lm:reconstruction-conditions}, if there is $u_j'$ that satisfies all three properties in Lemma 4, then $u_j'=u_j$.
\end{lemma}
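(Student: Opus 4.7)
The plan is to argue by contradiction. Suppose $u_j' \ne u_j$, and set $L = \min(u_j, u_j')$, $U = \max(u_j, u_j')$, and $\Delta = U - L \ge 1$. The strategy is to combine the three properties of Lemma \ref{lm:reconstruction-conditions} applied to the true value $u_j$ with the same three properties applied to the candidate $u_j'$, using for each $k \in [a_j, b_j]$ the \emph{tighter} of the two resulting upper bounds on $c_{j,k}$. I will then show that the sum of these tight bounds is at most $M - 1$, which contradicts the identity $\sum_{k=a_j}^{b_j} c_{j,k} = M$.

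The choice of which bound to use is guided by two monotonicity observations: the Property 2 bound $A_{t, s-u+k}(r-1)$ is non-increasing in $u$ (for fixed $k$), so it is tightest when taken with respect to $U$; the Property 3 bound $A_{t-k+u, s}(r-1)$ is non-decreasing in $u$, so it is tightest when taken with respect to $L$. Accordingly, I partition $[a_j, b_j]$ into $[a_j, U-1] \cup \{U\} \cup [U+1, b_j]$ and apply Property 2 of $U$ on the first piece, Property 3 of $L$ evaluated at $k = U$ for the middle term (which yields $c_{j, U} \le A_{t-\Delta, s}(r-1)$), and Property 3 of $L$ on the last piece. Reindexing by $m = U - k$ and $\ell = k - L$, and invoking Property 1 (which gives $U - a_j \le s$ and $b_j - L \le t$, valid for either candidate), the sum of these bounds is at most $\sum_{m=1}^{s} A_{t,s-m}(r-1) + A_{t-\Delta,s}(r-1) + \sum_{\ell=\Delta+1}^{t} A_{t-\ell,s}(r-1)$. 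Absorbing the middle singleton $A_{t-\Delta,s}(r-1)$ as the $\ell = \Delta$ term of the last sum collapses everything into $\sum_{m=1}^{s} A_{t,s-m}(r-1) + \sum_{\ell=\Delta}^{t} A_{t-\ell, s}(r-1) = (M-1) - \sum_{\ell=1}^{\Delta-1} A_{t-\ell, s}(r-1) \le M - 1$, by the formula for $M$ in Theorem \ref{thm.insdel.exact}.

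The main obstacle is identifying the correct mixing of bounds: anchoring the partition at $U$ and specifically drawing Property 2 from $U$ while drawing Property 3 from $L$ is what produces the precise telescoping to $M - 1$. Other natural choices, such as anchoring at $u_j$ alone or using only one candidate's bounds, either fail to give a strict decrease or force a delicate case analysis on whether $u_j$ and $u_j'$ lie inside $[a_j, b_j]$. With the proposed anchor, the arithmetic is essentially a reindexing step, and the degenerate situations (e.g.\ $u_j \notin [a_j, b_j]$, so $c_{j, u_j} = 0$) require no separate treatment, since the chosen upper bounds remain valid and the argument proceeds unchanged.
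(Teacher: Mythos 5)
Your proposal is correct and follows essentially the same route as the paper: a proof by contradiction that bounds each $c_{j,k}$ using Property 2 of the larger candidate below a split point and Property 3 of the smaller candidate above it, then reindexes via Property 1 to get $\sum_{k} c_{j,k} \le \sum_{i=1}^{s} A_{t,s-i}(r-1) + \sum_{i=1}^{t} A_{t-i,s}(r-1) = M-1 < M$. The only difference is that you anchor the partition at $U=\max(u_j,u_j')$ while the paper anchors at $\min(u_j,u_j')$, which is an immaterial bookkeeping choice.
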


\begin{IEEEproof}
We see that $b_j - a_j \le s + t$. If $b_j - a_j = t + s$ and both $u_j, u_j'$ satisfy Property 1, then $u_j'=u_j=b_j-t=a_j+s$. Thus the lemma is proven in this case. 

 We now consider the case $b_j - t < a_j + s$ and prove $u_j=u_j'$ by a contradiction. Without loss of generality, we assume that $u_j<u_j'$ and both satisfy all three properties in Lemma \ref{lm:reconstruction-conditions}.
    \begin{itemize}
        \item For every $a_j \le k \le u_j$, since $k \le u_j < u_j'$, from \eqref{eq:condition-1} we have:
        \begin{equation*}
            c_{j, k} \le A_{t,s-u_j'+k} (r-1).
        \end{equation*}
        Hence:
        \begin{equation*}
            \sum_{k=a_j}^{u_j} c_{j, k} \le \sum_{k=a_j}^{u_j} A_{t,s-u_j'+k} (r-1) 
            \le \sum_{i=1}^{s} A_{t,s-i} (r-1).
        \end{equation*}
        Here the last inequality follows the fact that $u_j'-u_j \ge 1$ and $u_j' - a_j \le s$.
        \item For every $u_j < k \le b_j$, from \eqref{eq:condition-2} we have:
        \begin{equation*}
            c_{j, k} \le A_{t-k+u_j,s} (r-1).
        \end{equation*}
        Hence:
        \begin{align*}
            \sum_{k=u_j+1}^{b_j} c_{j, k} &\le \sum_{k=u_j+1}^{b_j} A_{t-k+u_j,s} (r-1) \\
            &\le \sum_{i=1}^{t} A_{t-i,s} (r-1).
        \end{align*}
        Here the last inequality follows the fact that $b_j-u_j \le t$.
    \end{itemize}

    Summing up, we obtain $$\sum_{k=a_j}^{b_j} c_{j, k} \le \sum_{i=1}^{s} A_{t,s-i}(r-1) + \sum_{i=1}^{t} A_{t-i,s}(r-1) < M.$$
This is a contradiction as we know that $\sum_{k=a_j}^{b_j} =M$ in Lemma \ref{lm:reconstruction-conditions}. We conclude the lemma since $u_j=u_j'$.
\end{IEEEproof}


\section{Reconstruction Algorithm}
From previous section, we showed that it is possible to uniquely recover the original sequence $\bfx$ from $M$ distinct erroneous sequences $\bfy_1,\bfy_2,\ldots,\bfy_M$ where $M$ is $N_{t,s}(r)$ as in Theorem \ref{thm.insdel.exact} and each $\bfy_i$ is obtained from $\bfx$ after at most $t$ sticky-insertions and $s$ sticky-deletions. 
In this section, we briefly discuss the reconstruction algorithms and analyze the complexity of these algorithms. 
Section~\ref{sc:reconstruction-uniqueness} has already established the theoretical basis for there construction algorithm.
 For each index $j$ from $1$ to $r$, we define $a_j, b_j$ as in Lemma \ref{lm:reconstruction-conditions}, then obtain the only value of $u_j \in [b_j-t, a_j+s]$ that satisfies the conditions as shown in Lemma \ref{lm:reconstruction-conditions}. We can reconstruct $\bfx$ from $\bfy_1, \ldots, \bfy_M$ following Algorithm \ref{algo:reconstruction}. The algorithm requires some subroutines, including $\mathrm{CheckValid}$ Algorithm.

\begin{algorithm}[h!] \label{algo:check-valid}
    \SetKwInOut{KwIn}{Input}
    \SetKwInOut{KwOut}{Output}
    \SetKw{KwExists}{exists}
    \SetKw{KwAnd}{and}
    \SetKw{KwBreak}{break}

    \KwIn{$\bfy_1, \ldots, \bfy_M$}
    \KwOut{$\vx$ or $\fail$}

    $\mathrm{formation}_1 \gets [y_{1, 1}]$ \\

    $\mathrm{pos} \gets 1$
    
    \For{$j = 2$ \KwTo $n$}{
        \If{$y_{1, j} \neq y_{1, j-1}$}{
        {
        $\mathrm{pos}++$
        
        $\mathrm{formation}_{\mathrm{pos}} \gets y_{1, j}$}
        }
    }

    \For{$i = 2$ \KwTo $m$}{
        $\mathrm{check\_pos} \gets 1$
        
        \lIf{$y_{i, 1} \neq \mathrm{formation}_1$}{\KwRet{$\fail$}}

        \For{$j = 2$ \KwTo $n$}{
            \If{$y_{i, j} \neq y_{i, j-1}$}{
                $\mathrm{check\_pos}++$

                \lIf{$\mathrm{check\_pos} > \mathrm{pos}$}{\KwRet{$\fail$}}

                \lIf{$y_{i, j} \neq \mathrm{formation}_{\mathrm{check\_pos}}$}{\KwRet{$\fail$}}
            }
        }
    }

    \KwRet{$\mathrm{SUCCESS}$}
    \caption{Check validity of inputs $\bfy_1, \ldots, \bfy_M$.}
\end{algorithm}

In Algorithm \ref{algo:reconstruction}, we first check the validity of $\bfy_1, \ldots, \bfy_M$ by seeing if they follow the same run formation. After passing the check, we iterate through each run index from $1$ to $r$, do processing (obtaining $a_i, b_i$, as well as all the counts), then find the satisfied $u_j$.

\begin{algorithm}[h!] \label{algo:reconstruction}
    \SetKwInOut{KwIn}{Input}
    \SetKwInOut{KwOut}{Output}
    \SetKw{KwExists}{exists}
    \SetKw{KwAnd}{and}
    \SetKw{KwBreak}{break}

    \KwIn{$\bfy_1, \ldots, \bfy_M$}
    \KwOut{$\vx$ or $\fail$}

    $\mathrm{CheckValid}(\bfy_1, \ldots, \bfy_M)$
    
    \lFor{$i = 1$ \KwTo $M$}{
        $\bfv_i = (v_{i1}, \ldots, v_{ir}) \gets \phi(\vy_i)$
    }

    \For{$j = 1$ \KwTo $r$}{
        $a_j \gets \min_i(v_{i, j})$ \\
        $b_j \gets \max_i(v_{i, j})$ \\
        \lIf{$b_j - a_j > s + t$}{\KwRet{$\fail$}}
        \lElseIf{$b_j - a_j = s + t$}{
            $u_j \gets a_j + s$ \\
        }
        \Else{
            $(c_{j, a_j}, c_{j, a_j+1}, \ldots, c_{j, b_j}) \gets \mathrm{FindFrequency}(a_j, b_j, (v_{1, j}, \ldots, v_{M, j})$ \\
            $u_j \gets \mathrm{FindSatisfiedValue}(a_j, b_j, (c_{j, a_j}, c_{j, a_j+1}, \ldots, c_{j, b_j}))$ \\
            \lIf{$u_j = \fail$}{return $\fail$}
        }
    }
    \KwRet{$\phi^{-1}(\vu)$}
    \caption{Reconstruction algorithm from $M = N_{t,s}(r)$ sticky insdel channel outputs to a tuple in $\NN^{r}$}
\end{algorithm}

\textit{Time Complexity.} Checking validity of $\bfy_1, \ldots, \bfy_M$ and obtaining run lengths require linear scans, which costs $O(Mn)$ time. After the preprocessing, we iterate through each run index from $1$ to $r$, where obtaining $a_j$ and $b_j$ costs $O(M)$, obtaining frequencies also costs $O(M)$, and the complexity for $\mathrm{FindSatisfiedValue}$ would be discussed in Section \ref{ssc:naive-approach} and Section \ref{ssc:optimized-approach}.

\subsection{Naive Approach} 
\label{ssc:naive-approach}
There is a naive approach to find $u_j$ that satisfies the conditions in Lemma \ref{lm:reconstruction-conditions} by simply looping through all $u_j \in [a_j, b_j]$, and use another loop to check for the correctness of $u_j$. The naive subroutine is shown in Algorithm \ref{algo:naive-find-original}.

\begin{algorithm}[h!] \label{algo:naive-find-original}
    \SetKwInOut{KwIn}{Input}
    \SetKwInOut{KwOut}{Output}
    \SetKw{KwExists}{exists}
    \SetKw{KwAnd}{and}
    \SetKw{KwBreak}{break}

    \KwIn{$a_j, b_j, (c_{j, a_j}, c_{j, a_j+1}, \ldots, c_{j, b_j})$}
    \KwOut{$u_j$ or $\fail$}

    \For{$u_j = b_j - t$ \KwTo $a_j + s$}{
        $\mathrm{flag} \gets 1$
        
        \For{$k = a_j$ \KwTo $u_j-1$}{
            \lIf{$c_{j, k} > A_{t,s - u_j + k}(r-1)$}{$\mathrm{flag} \gets 0$}
        }
        
        \For{$k = u_j+1$ \KwTo $b_j$}{
            \lIf{$c_{j, k} > A_{t - k + u_j,s}(r-1)$}{$\mathrm{flag} \gets 0$}
        }

        \lIf{$\mathrm{flag}$}{\KwRet{$u_j$}}
        \lElse{continue}
    }
    
    \KwRet{$\fail$}
    \caption{Naive algorithm $\mathrm{FindSatisfiedValue}$}
\end{algorithm}

\textit{Time Complexity.} Looping through all $u_j$ from $b_j-t$ to $a_j+s$ and then $k$ from $a_j$ to $b_j$ takes $(a_j - b_j + s + t)(b_j - a_j) \le O((s+t)^2)$ iterations.

\subsection{Optimized Approach}
\label{ssc:optimized-approach}
In this subsection, we provide an efficient two-pointer algorithm to find the unique value $u_j$ satisfying Lemma \ref{lm:reconstruction-conditions}. 

Claim \ref{lm:slice-1} further suggests that there exists $\alpha_j \in [b_j-t, a_j+s]$ such that $x$ satisfies \eqref{eq:condition-1} for all $x \in [b_j-t, \alpha_j]$, and does not satisfy \eqref{eq:condition-1} otherwise.
Similar to Claim \ref{lm:slice-1}, Claim \ref{lm:slice-2} "slice" the interval $[b_j-t, a_j+s]$ on the other direction, by showing that there exists $\beta_j \in [b_j-t, a_j+s]$ such that $x$ satisfies \eqref{eq:condition-2} for all $x \in [\beta_j, a_j+s]$, and does not satisfy \eqref{eq:condition-2} otherwise. 

\begin{claim}\label{lm:slice-1}
    If $\alpha \in [b_j - t, a_j + s]$ does not satisfy \eqref{eq:condition-1}, then $u$ does not satisfy \eqref{eq:condition-1} for all $u > \alpha$.
\end{claim}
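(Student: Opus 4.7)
The plan is a straightforward contrapositive, built on a monotonicity property of $A_{t,s}(r)$ that is not spelled out in the paper. The key observation is that condition \eqref{eq:condition-1} at a value $\alpha$ is the conjunction of inequalities $c_{j,k}\le A_{t,s-\alpha+k}(r-1)$ over indices $k\in[a_j,\alpha-1]$. Failure of \eqref{eq:condition-1} therefore exhibits a concrete witness index $k^{\star}$ at which the bound is violated, and the goal is to transfer that witness from $\alpha$ to an arbitrary larger $u$.

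First I would prove the auxiliary monotonicity statement that, with $t$ and $r$ held fixed, $A_{t,s}(r)$ is non-decreasing in $s$ (the convention $A_{t,s}(r-1)=0$ for $s<0$ is consistent with this). This goes by induction on $r$ using the recursion \eqref{eq1-recursive-maxsize}: the base $A_{t,s}(1)=t+s+1$ is immediate, and in the inductive step $A_{t,s+1}(r)-A_{t,s}(r)$ splits into the nonnegative contribution $A_{t,s}(r-1)$ (coming from the extra term in $\sum_{i=1}^{s+1}A_{t,s+1-i}(r-1)$) plus differences $A_{\cdot,s+1}(r-1)-A_{\cdot,s}(r-1)$, each nonnegative by the inductive hypothesis.

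Given this, the claim is essentially one line. Suppose $\alpha\in[b_j-t,a_j+s]$ fails \eqref{eq:condition-1}, and fix $k^{\star}$ with $a_j\le k^{\star}<\alpha$ and $c_{j,k^{\star}}>A_{t,s-\alpha+k^{\star}}(r-1)$. For any $u>\alpha$, the same index satisfies $a_j\le k^{\star}<u$, so it is a legal index at which to test \eqref{eq:condition-1} for $u$. Since $s-u+k^{\star}<s-\alpha+k^{\star}$, the monotonicity lemma yields $A_{t,s-u+k^{\star}}(r-1)\le A_{t,s-\alpha+k^{\star}}(r-1)<c_{j,k^{\star}}$, so $u$ fails \eqref{eq:condition-1} at the very same witness $k^{\star}$.

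The main obstacle is not the core argument, which is trivial given monotonicity, but the monotonicity lemma itself; even that is routine. One subtlety worth flagging during write-up: when $u$ is large enough to make $s-u+k^{\star}$ negative, the bound $A_{t,s-u+k^{\star}}(r-1)$ is $0$ by convention, but $c_{j,k^{\star}}$ is still a positive integer (it already strictly exceeds $A_{t,s-\alpha+k^{\star}}(r-1)\ge 0$), so the strict inequality survives and the argument is uniform across the whole range.
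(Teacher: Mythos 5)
Your proposal is correct and follows essentially the same route as the paper's proof: fix the witness index $k^{\star}$ at which $\alpha$ violates \eqref{eq:condition-1} and transfer it to any $u>\alpha$ via the monotonicity of $A_{t,s}(r-1)$ in $s$. The only difference is that you actually prove the monotonicity lemma (by induction on $r$ through the recursion \eqref{eq1-recursive-maxsize}) and handle the $s-u+k^{\star}<0$ edge case, both of which the paper asserts or leaves implicit; this is a welcome tightening rather than a new approach.
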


\begin{proof}
    Since $\alpha$ does not satisfy \eqref{eq:condition-1}, there exists $k$ such that $a_j \le k < \alpha_j$, and:
    \begin{equation*}
        c_{j, k} > A_{t,s-\alpha_j+k} (r-1).
    \end{equation*}
    For $u > \alpha$, we also see that $k$ and $u$ does not satisfy \eqref{eq:condition-1}. In other word:
    \begin{equation*}
        c_{j, k} > A_{t,s-u+k} (r-1),
    \end{equation*}
    since $A_{t,s-\alpha+k} (r-1) > A_{t,s-u+k} (r-1)$ due to the fact that $s-\alpha+k > s-u+k$. 
    We conclude that for all $u > \alpha$, $u$ does not satisfy \eqref{eq:condition-1}.
\end{proof}

From the proof of claim \ref{lm:slice-2}, we obtained these observations:
\begin{itemize}
    \item If the check for value $u$ and "check position" $k$ with $k < u$ satisfies condition \eqref{eq:condition-1}, we can skip all the checks for pairs $(k, u')$ with $k < u' < u$, since they all satisfy \eqref{eq:condition-1}.
    \item If the check for value $u$ and "check position" $k$ with $k < u$ fails \eqref{eq:condition-1}, we do not need to check back if $(k, u')$ satisfies \eqref{eq:condition-1} for any $u' > u$, since they would all fail.
\end{itemize}

The observations suggested the two-pointer Algorithm \ref{algo:optimized-find-original}. We also stated the proof for Claim \ref{lm:slice-2} here for completeness.

\begin{algorithm}[h!] \label{algo:optimized-find-original}
    \SetKwInOut{KwIn}{Input}
    \SetKwInOut{KwOut}{Output}
    \SetKw{KwExists}{exists}
    \SetKw{KwAnd}{and}
    \SetKw{KwBreak}{break}

    \KwIn{$a_j, b_j, (c_{j, a_j}, c_{j, a_j+1}, \ldots, c_{j, b_j})$}
    \KwOut{$u_j$ or $\fail$}

    $u_j \gets a_j + s, k \gets a_j + s - 1$ \\
    \While{$\mathrm{true}$}{
        \lIf{$k = a_j$}{
            \KwRet{$u_j$}
        }
        \lElseIf{$k = u_j$}{
            $k--$;
        }
        \uElse{
            \lIf{$c_{j, k} > A_{t,s-u_j+k} (r-1)$}{$u_j--$}
            \lElse{ $k--$}
        }
    }
    \KwRet{$\fail$}

    \caption{Optimized algorithm $\mathrm{FindSatisfiedValue}$}
\end{algorithm}

\begin{claim} \label{lm:slice-2}
    If $\beta \in [b_j-t, a_j+s]$ does not satisfy \eqref{eq:condition-2}, then $u$ does not satisfy \eqref{eq:condition-2} for all $u < \beta$.
\end{claim}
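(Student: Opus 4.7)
The plan is to mirror the proof of Claim~\ref{lm:slice-1} using the symmetry between conditions \eqref{eq:condition-1} and \eqref{eq:condition-2}. Where Claim~\ref{lm:slice-1} exploits that increasing $u$ shrinks the deletion budget $s-u+k$, Claim~\ref{lm:slice-2} should exploit that decreasing $u$ shrinks the insertion budget $t-k+u$.

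Concretely, I would start by unpacking the hypothesis: if $\beta \in [b_j-t, a_j+s]$ fails \eqref{eq:condition-2}, then by definition there exists some index $k$ with $\beta < k \le b_j$ such that $c_{j,k} > A_{t-k+\beta,\,s}(r-1)$. I would then fix an arbitrary $u < \beta$ and aim to use this same witness $k$ to certify that $u$ also fails \eqref{eq:condition-2}. Observe that $u < \beta < k \le b_j$, so $k$ is still a legitimate check position for the value $u$ in condition \eqref{eq:condition-2}.

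The core of the argument is a monotonicity observation: $A_{t',s}(r-1)$ is non-decreasing in $t'$. This I would justify in one line, either by a direct induction on $r$ using the recursion \eqref{eq1-recursive-maxsize}, or, more transparently, by appealing to Lemma~\ref{lm2:max-ball-size-large-tuple} and noting that enlarging the insertion budget can only add tuples to the asymmetric error ball. Since $u < \beta$ implies $t - k + u < t - k + \beta$, this monotonicity gives
\begin{equation*}
    A_{t-k+u,\,s}(r-1) \le A_{t-k+\beta,\,s}(r-1) < c_{j,k},
\end{equation*}
so $u$ and $k$ violate \eqref{eq:condition-2}, and hence $u$ fails \eqref{eq:condition-2} as claimed.

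I do not expect any genuine obstacle here; the proof is a direct mirror image of Claim~\ref{lm:slice-1}, with the roles of $t$ and $s$ (and of the inequalities $k<u$ versus $k>u$) swapped. The only thing one must be careful about is invoking monotonicity in the correct variable, namely in the insertion radius $t'$ rather than in the deletion radius $s'$; once this is noted, the remaining bookkeeping is routine.
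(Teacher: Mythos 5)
Your proof is correct and follows essentially the same route as the paper: fix the witness index $k$ with $\beta < k \le b_j$ from the failure of \eqref{eq:condition-2} at $\beta$, and use monotonicity of $A_{t',s}(r-1)$ in $t'$ to conclude that the same $k$ witnesses failure for every $u < \beta$. The only difference is that you explicitly justify the monotonicity (via the recursion or the ball interpretation), whereas the paper simply asserts it; this is a minor improvement in rigor, not a different argument.
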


\begin{proof}
    Since $\beta$ does not satisfy \eqref{eq:condition-2}, there exists $i$ such that $\beta < k \le b_j$ and:
    \begin{equation*}
        c_{j, k} > A_{t-k+\beta,s} (r-1).
    \end{equation*}
    For $u < \beta$, we also see that $k$ and $u$ does not satisfy \eqref{eq:condition-2}. In other word:
    \begin{equation*}
        c_{j, k} > A_{t-k+u,s} (r-1).
    \end{equation*}
    We note that $A_{t-k+\beta,s} (r-1) > A_{t-k+u,s} (r-1)$ since $t-k+\beta > t-k+u$.
    We conclude that for all $u < \beta$, $u$ does not satisfy \eqref{eq:condition-2}.
\end{proof}

\section{Conclusion}
In this paper, we have studied the sequence reconstruction problem for sticky-insdel channel. The first question of interest is to find the minimum number of distinct erroneous outputs needed for the receiver to uniquely reconstruct the transmitted vector. In this work, we consider the case that there are at most $t$ sticky-insertion errors and $s$ sticky-deletion errors.
We used some enumerative combinatorics techniques to provide an explicit formula of required number. Then, we also provided efficient algorithms to uniquely recover the original vector from enough erroneous sequences. 

\section{Acknowledgement}
We thank Manabu Hagiwara for the insightful discussion about the explicit formula of $A_{t, s}(r)$ in Equation~\eqref{eq.Atsr}.

\newpage

\newpage





\end{document}